\newtheorem{theorem}{Theorem}
\newtheorem{definition}{Definition}
\newtheorem{problem}{Problem}
\begin{document}

\title{On the Classical Hardness of Spoofing Linear Cross-Entropy Benchmarking}
\author{Scott Aaronson\textsuperscript{*} \and Sam Gunn\textsuperscript{\textdagger}}
\thanks{\textsuperscript{*} University of Texas at Austin. Email: \texttt{aaronson@cs.utexas.edu}. Supported by a Vannevar Bush Fellowship from the US Department of Defense, a Simons Investigator Award, and the Simons ``It from Qubit" collaboration.}
\thanks{\textsuperscript{\textdagger} University of Texas at Austin. Email: \texttt{samgunn111@utexas.edu}.}

\begin{abstract}
Recently, Google announced the first demonstration of quantum
computational supremacy with a programmable superconducting processor (\cite{Google}).
Their demonstration is based on collecting samples from the
output distribution of a noisy random quantum circuit, then
applying a statistical test to those samples called Linear
Cross-Entropy Benchmarking (Linear XEB). This raises a
theoretical question: How hard is it for a classical computer to
spoof the results of the Linear XEB test? In this short note, we adapt an
analysis of \cite{QUATH} to prove a conditional hardness
result for Linear XEB spoofing. Specifically, we show that the
problem is classically hard, assuming that there is no efficient
classical algorithm that, given a random $n$-qubit quantum circuit $C$,
estimates the probability of $C$ outputting a specific output string, say $0^n$, with
mean squared error even slightly better than that of the trivial estimator that
always estimates $1/2^n$. Our result automatically encompasses the case
of noisy circuits.
\end{abstract}

\maketitle
\section{Introduction}
Quantum computational supremacy refers to the solution of a well-defined computational task by a
programmable quantum computer in significantly less time than is required by the best known
algorithms running on existing classical computers, for reasons of asymptotic scaling. It is a
prerequisite for useful quantum computation, and is therefore seen as a major milestone in the field.
The task of sampling from random quantum circuits (called RCS) is one
proposal for achieving quantum supremacy (\cite{BISB}, \cite{BFNV}, \cite{QUATH}). Unlike
other proposals such as Boson Sampling (\cite{BosonSampling}) and Commuting Hamiltonians (\cite{IQP}), RCS involves a
universal quantum computer -- one theoretically capable of applying any unitary transformation. Achievement of quantum supremacy via RCS is therefore a
demonstration wherein a \emph{universal} quantum computer is able to outperform any existing classical
computer on a computational task.

A research team based at Google has announced a demonstration of quantum computational supremacy, by sampling the output distributions
of random quantum circuits (\cite{Google}). To verify that their circuits were working correctly, they
tested their samples using Linear Cross-Entropy Benchmarking (Linear XEB).
This test simply checks that the observed samples tend to concentrate on the outputs that
have higher probabilities under the ideal distribution for the given quantum circuit.
More formally, given samples $z_1, \hdots, z_k \in \{0,1\}^n$, Linear XEB entails checking that
$\mathbb{E}_i[P(z_i)]$ is greater than some threshold $b/2^n$, where $P(z)$ is the probability of observing $z$ under the ideal distribution. In the regime of 40-50 qubits, these probabilities can
be calculated by a classical supercomputer with enough time.

While there is some support for the conjecture that no classical algorithm can efficiently
sample from the output distribution of a random quantum circuit (\cite{BFNV}),
less is known about the hardness of directly spoofing a test like Linear XEB.
Results about the hardness of sampling are not quite results about the hardness of spoofing
Linear XEB; a device could score well on Linear XEB while being far from correct in total variation
distance by, for example, always outputting the items with the $k$ highest probabilities.

Under the assumption that the noise in the device is ``depolarizing" -- that a sample from the
circuit was sampled correctly with probability $b-1$ and otherwise sampled uniformly at random -- 
there is stronger evidence that it is difficult to spoof Linear XEB. Namely, if there is a classical
algorithm for sampling from a quantum circuit with depolarizing noise in time $T$, then with the help of an all-powerful but untrusted prover, one can calculate a good estimate for output probabilities in time $10T/(b-1)$ with high probability over circuits.
Together with results of \cite{SimBounds}, it follows that under the Strong Exponential Time
Hypothesis there is a quantum circuit from which one
cannot classically sample with depolarizing noise in time $(b-1) 2^{(1-o(1))n}$
(\cite{Google}, Supplement XI). We are not aware of any evidence that does not depend
on such a strong assumption about the noise.

However, \cite{QUATH} were able to prove the hardness of a different, related
verification procedure from a strong \emph{hardness} assumption they called the Quantum Threshold
Assumption (QUATH). Informally, QUATH states that it is impossible for a polynomial-time
classical algorithm
to guess whether a specific output string like $0^n$ has greater-than-median probability of
being observed
as the output of a given $n$-qubit quantum circuit, with success probability $1/2+\Omega(1/2^n)$.
They went on to investigate algorithms for breaking QUATH by
estimating the output amplitudes of quantum circuits. For certain classes of 
circuits, output amplitudes can be efficiently calculated, but in general even efficiently sampling
from the output distribution is impossible unless the polynomial hierarchy collapses
(\cite{BosonSampling}, \cite{IQP}).
\cite{QUATH} found an algorithm for calculating amplitudes of arbitrary circuits
that runs in time $d^{O(n)}$ and $poly(n,d)$ space, where $d$ is the circuit depth.
This is now used in some state-of-the-art simulations, but is still too slow and of the wrong
form to violate QUATH for larger circuits, as there is no way to trade the accuracy for polynomial-time efficiency.

Here, we formulate a slightly different assumption that we call XQUATH and
show that it implies the hardness of spoofing Linear XEB. Like QUATH, the 
new assumption is quite strong, but makes no reference to sampling. In particular,
while we don't know a reduction, refuting XQUATH seems essentially as hard as refuting QUATH.
Note that our result says nothing, one way or the other, about the possibility of improvements to
algorithms for calculating amplitudes. It just says that there's nothing particular to spoofing Linear
XEB that makes it easier than nontrivially estimating amplitudes.

Indeed, since the news of the Google group's success broke, at least three results
have potentially improved on the classical simulation efficiency, beyond what Google had considered.
First, \cite{Gray} was able to optimize tensor network contraction methods to
obtain a faster classical amplitude estimator, though it's not yet clear whether this will be competitive
for calculating millions of amplitudes at once.
Second, \cite{IBM} argued that, by using secondary storage, existing classical supercomputers
should be able to simulate the experiments done at Google in a few days. Third, \cite{2DCircuits} produced an efficient algorithm for approximately simulating average-case quantum circuits from a certain distribution of \emph{constant} depth, which is impossible to efficiently exactly simulate classically in the worst-case unless the polynomial hierarchy collapses.
% NEW^ ...... definitely needs to be stated differently...
This algorithm does not work for circuits as deep as those used by \cite{Google}.
Our result provides some explanation for why these improvements had to target the general problem
of amplitude estimation, rather than doing anything specific to the problem of spoofing Linear XEB.

\section{Preliminaries}
Throughout this note we will refer to random quantum circuits. Our results apply to circuits chosen
from any distribution $\mathcal{D}$ over circuits on $n$ qubits that is unaffected by appending
NOT gates to any subset of the qubits at the end of the circuit.\footnote{The experiment of \cite{Google} did not have this property. However, since applying NOT gates at the end of a circuit is equivalent to relabeling outputs, this issue is not physical.}
For every such distribution there is a corresponding version of XQUATH.
For instance, we could consider a
distribution where $d$ alternating layers of random single- and neighboring two-qubit gates are
applied to a square lattice of $n$ qubits, as in Google's experiment.
Our assumption XQUATH states that no efficient classical algorithm can estimate the
probability of such a random circuit $C$ outputting $0^n$, with mean squared error even slightly lower
than the trivial algorithm that always estimates $1/2^n$.
\begin{definition}[XQUATH, or Linear Cross-Entropy Quantum Threshold Assumption]
There is no polynomial-time classical algorithm that takes as input a quantum circuit $C \leftarrow \mathcal{D}$ and produces an estimate $p$ of $p_0 = \Pr[C \text{ outputs } 0^n]$ such that\footnote{The reason for the bound being $2^{-3n}$ will emerge from our analysis.}
$$\mathbb{E}[(p_0 - p)^2] = \mathbb{E}[(p_0 - 2^{-n})^2] - \Omega(2^{-3n})$$
where the expectations are taken over circuits $C$ as well as the algorithm's internal randomness.
\end{definition}
The simplest way to attempt to refute XQUATH might be to try $k$ random Feynman paths of the circuit, all of
which terminate at $0^n$, and take the empirical mean over their contributions to
the amplitude. However, this approach will only yield an improvement in mean squared error over the trivial
algorithm that decays exponentially with the number of {\it gates} in the circuit, rather than the
number of qubits. When the number of gates is much larger than $3n$, as in \cite{Google}, it is
clear that this approach cannot violate XQUATH.
As mentioned above, even the best existing quantum simulation algorithms do not
appear to significantly help in refuting XQUATH for reasonable circuit distributions.

The problem XHOG is to generate outputs of a given quantum circuit that have high expected
squared-magnitude amplitude. These outputs are required to be distinct for reasons that will become clear in the proof of Theorem 1.
\begin{problem}[XHOG, or Linear Cross-Entropy Heavy Output Generation]
Given a circuit $C$, generate $k$ distinct samples $z_1, \hdots, z_k$ such that $\mathbb{E}_i[\left|\bra{z_i} C \ket{0^n}\right|^2] \ge b/2^n$.
\end{problem}
The interesting case is when $b > 1$, and we will generally think of $b$ as a constant. Without fault-tolerance, $b-1$ will quickly become very small for circuits larger than the experiment can
handle. This is a difficulty of applying complexity theory to finite experiments, which fail
when the problem instance is too large.

When the depth is large enough, the output probabilities $p$ of almost all circuits are empirically
observed to be accurately described by the distribution $2^n e^{-2^n p}$, although
this has only been rigorously proven in some special cases (\cite{BISB}, \cite{Google}, \cite{TDesigns}). Under this
assumption, for observed outputs $z$ from ideal circuits $C \leftarrow \mathcal{D}$ we have
$$\mathbb{E}[\left|\bra{z} C \ket{0^n}\right|^2] \approx \int_0^\infty \frac{x}{2^{n}} x e^{-x} dx = \frac{2}{2^n}$$
So we expect an ideal circuit to solve XHOG with $b \approx 2$, and a noisy circuit to solve
XHOG with $b$ slightly larger than 1. Theorem 1 says that, assuming
XQUATH, solving XHOG with $b>1$ is hard to do classically with many samples and high probability.
For completeness, we show in the Appendix that with Google's number of samples and
estimated circuit fidelity, they would be expected to solve XHOG with sufficiently high probability.

\section{Proof of the Reduction}
We now provide a reduction from the problem in XQUATH to XHOG. Since we only call the XHOG
algorithm once in the reduction, and all other steps are efficient, solving XHOG actually requires as
many computational steps as solving the problem in XQUATH, minus $O(k)$.
\begin{theorem}
Assuming XQUATH, no polynomial-time classical algorithm can solve XHOG with probability $s > \frac{1}{2} + \frac{1}{2b}$, and
$$k \ge \frac{1}{((2s-1)b-1)(b-1)}.$$
With $b = 1+\delta$ and $s = \frac{1}{2} + \frac{1}{2b} + \epsilon$, the right-hand side is approximately $1/2\epsilon\delta$.
\end{theorem}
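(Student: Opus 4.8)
The plan is to prove the contrapositive by a reduction: assuming a polynomial-time classical algorithm solves XHOG with success probability $s>\frac12+\frac1{2b}$ using $k\ge\frac1{((2s-1)b-1)(b-1)}$ distinct samples, I would build a polynomial-time classical algorithm that, on input $C\leftarrow\mathcal D$, estimates $p_0=|\bra{0^n}C\ket{0^n}|^2$ with mean squared error smaller than that of the trivial estimator $2^{-n}$ by $\Omega(2^{-3n})$, contradicting XQUATH. The reduction invokes the XHOG solver exactly once and does $O(k)$ additional work, so it is polynomial time.

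The construction is as follows. Given $C\leftarrow\mathcal D$, draw $x\in\{0,1\}^n$ uniformly at random, let $X$ be the layer of NOT gates flipping the coordinates where $x_i=1$, and set $C':=XC$. Since $\mathcal D$ is unaffected by appending NOT gates, the pair $(C',x)$ is distributed as a circuit from $\mathcal D$ together with an \emph{independent} uniform string, and $\bra{x}C'\ket{0^n}=\bra{0^n}C\ket{0^n}$, so $p_0=|\bra{x}C'\ket{0^n}|^2$. Run the XHOG solver on $C'$ to obtain $z_1,\dots,z_k$, replacing the output by fixed distinct strings on the (failure) events where it is malformed or non-distinct, so that we always hold $k$ distinct strings. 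Finally output the estimate $p:=b/2^n$ if $x\in\{z_1,\dots,z_k\}$ and $p:=1/2^n$ otherwise.

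For the analysis, write $W=\mathbf{1}[x\in\{z_1,\dots,z_k\}]$, so $p=2^{-n}+(b-1)2^{-n}W$. Since $x$ is uniform and independent of the solver's output (which depends only on $C'$ and the solver's own randomness) and the $z_i$ are distinct, $\Pr[W=1]=k/2^n$ exactly; moreover, conditioning on $(C',\text{solver randomness})$ gives $\mathbb{E}[W p_0\mid\cdot]=2^{-n}\sum_{i=1}^k|\bra{z_i}C'\ket{0^n}|^2$, which is at least $kb/2^n$ on the success event (probability $\ge s$) and nonnegative always, whence $\mathbb{E}[W p_0]\ge skb/2^{2n}$ and $\mathbb{E}[W(p_0-2^{-n})]\ge k(sb-1)/2^{2n}$ (note $sb>1$ since $b>1$). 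Expanding the square and using $\mathbb{E}[W^2]=\mathbb{E}[W]=k2^{-n}$, the $\mathbb{E}[(p_0-2^{-n})^2]$ terms cancel and the improvement over the trivial estimator is
$$\mathbb{E}\bigl[(p_0-2^{-n})^2\bigr]-\mathbb{E}\bigl[(p_0-p)^2\bigr]\;\ge\;\frac{k(b-1)}{2^{3n}}\bigl(2(sb-1)-(b-1)\bigr)\;=\;\frac{k(b-1)\bigl((2s-1)b-1\bigr)}{2^{3n}},$$
which is $\Omega(2^{-3n})$ (this is where the exponent $3n$ in XQUATH comes from) as soon as $k(b-1)((2s-1)b-1)=\Omega(1)$, i.e. for $k$ at least the bound stated in the theorem. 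Substituting $b=1+\delta$ and $s=\frac12+\frac1{2b}+\epsilon$ gives $(2s-1)b-1=2\epsilon b$ and hence right-hand side $\approx 1/2\epsilon\delta$.

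I expect the main point requiring care to be the planting step: verifying that appending a uniformly random NOT layer makes the target $x$ genuinely uniform and independent of everything the XHOG solver can observe, so that a hit $\{x\in\{z_i\}\}$ occurs with probability exactly $k/2^n$ (this is precisely why distinct samples are needed) and carries the full XHOG signal $\mathbb{E}[p_0\mid W=1]\ge sb/2^n$. Everything after that is a bias--variance computation; the only subtlety there is to use the estimate value $b/2^n$ (rather than an optimized one) so the improvement collapses to the clean quantity $(b-1)((2s-1)b-1)$, and to track signs so that the lower bound on $\mathbb{E}[W(p_0-2^{-n})]$ enters with a positive coefficient while the exact value of $\mathbb{E}[W^2]$ enters the subtracted term.
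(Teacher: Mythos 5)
Your proposal is correct and follows essentially the same route as the paper: plant a uniformly random target by appending NOT gates, run the XHOG solver once on the re-randomized circuit, output $b/2^n$ on a hit and $2^{-n}$ otherwise, and lower-bound the mean-squared-error improvement by $k(b-1)((2s-1)b-1)/2^{3n}$. Your bookkeeping via the indicator $W$ and $\mathbb{E}[Wp_0]$ is an equivalent (arguably slightly cleaner) reorganization of the paper's conditioning on the events $\{z\in\{z_i\}\}$ and the solver's success or failure, and your identified subtleties (independence of the planted target, exact hit probability $k/2^n$ from distinctness) are exactly the ones the paper relies on.
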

\begin{proof}
Suppose that $A$ is a classical algorithm solving XHOG with the parameters above. Given a quantum circuit $C \leftarrow \mathcal{D}$, first draw a uniformly random $z \in \{0,1\}^n$, and apply NOT gates at the end of $C$ on qubits $i$ where $z_i = 1$ to get a circuit $C'$. According to our assumption on $\mathcal{D}$, $C'$ is distributed exactly the same as $C$, even conditioned on a particular $z$. Also, $\bra{0^n} C \ket{0^n} = \bra{z} C' \ket{0^n}$, so $\Pr[C \text{ outputs } 0^n] = \Pr[C' \text{ outputs } z]$. Call this probability $p_0$.

Run $A$ on input $C'$ to get $z_1, \hdots, z_k$ with $\mathbb{E}_i[\left|\bra{z_i} C \ket{0^n}\right|^2] \ge b2^{-n}$ (when $A$ succeeds). If $z\in\{z_i\}$, then our algorithm outputs $p = b 2^{-n}$; otherwise it outputs $p = 2^{-n}$.

Let $X = (p_0 - 2^{-n})^2 - (p_0 - p)^2$. Then
\begin{align*}
\mathbb{E}[X \mid z\in\{z_i\} \text{ and } A \text{ succeeded}] &= 2 \cdot 2^{-n}(b - 1) \cdot \mathbb{E}[p_0 \mid z\in \{z_i\} \text{ and } A \text{ succeeded}] + 2^{-2n} (1- b^2) \\
&\ge 2 \cdot 2^{-n}(b - 1) (b2^{-n}) + 2^{-2n} (1- b^2) \\
&= 2^{-2n} (b-1)^2 \\
\mathbb{E}[X \mid z\in\{z_i\} \text{ and } A \text{ failed}] &= 2 \cdot 2^{-n}(b - 1) \cdot \mathbb{E}[p_0 \mid z\in\{z_i\} \text{ and } A \text{ failed}] + 2^{-2n} (1- b^2) \\
&\ge -2^{-2n} (b^2-1) \\
\intertext{Since $\mathbb{E}[X \mid z\not\in\{z_i\}] = 0$, and since $z$ is uniformly random even conditioned on the output of $A$ and its success or failure,}
\mathbb{E}[X] &= 2^{-n} k s \cdot \mathbb{E}[X \mid z\in\{z_i\} \text{ and } A \text{ succeeded}] \\
&\ \ \ \ + 2^{-n} k (1-s) \cdot \mathbb{E}[X \mid z\in\{z_i\} \text{ and } A \text{ failed}] \\
&\ge 2^{-3n} k ((2s-1)b -1) (b-1)
\end{align*}
which is $\Omega(2^{-3n})$ as long as $k \ge 1/((2s-1)b-1)(b-1)$. This completes the proof.
\end{proof}
One simple instance of the theorem is to take $s = \frac{3}{4} + \frac{1}{4b}$ and $k = 2(b-1)^{-2}$.
Note that even with $s=1$, we need $k \ge (b-1)^{-2}$ samples for the proof to work. In fact, if
the number of samples $k$ is much smaller than $(b-1)^{-2}$, then even sampling uniformly at
random would pass XHOG with non-negligible probability (see Appendix).

\section{Open Problems}
We conclude with two open problems related to our reduction.
\begin{itemize}
\item Can the classical hardness of spoofing Linear XEB be based on a more secure assumption?
Is there a similar assumption to XQUATH that is {\it equivalent} to the classical hardness of XHOG?
\item Is XQUATH true? What is the relationship of XQUATH to QUATH?
\end{itemize}

\section*{Acknowledgements}
\noindent We thank Umesh Vazirani, Boaz Barak, Daniel Kane, Ryan O'Donnell, and Patrick Rall for
helpful discussions on the subject of this note.

\bibliographystyle{plainnat}
\bibliography{references.bib}

\section*{Appendix}
\subsection*{Probability of Failing XHOG}
We would like to be confident that one is solving XHOG with sufficiently high probability $s$ for Theorem 1 to apply, without having to perform the experiment enough times to verify this directly. If one has an estimate for their circuit fidelity and assumes depolarizing noise, the following calculation shows that this is possible. For clarity, we will restrict to the case of solving XHOG with $s = 3/4+1/4b$ and $k = 4(b-1)^{-2}$.

Let $Y = \left|\bra{z}C\ket{0^n}\right|^2$, where $z$ is sampled from our XHOG device which was given $C \leftarrow \mathcal{D}$. We show that if one can produce samples with estimated fidelity $\mathbb{E}[Y] \ge (2b-1)/2^n$, they can solve XHOG with probability $s = 3/4+1/4b$.

If our device is noise-free,
$$\text{Var}(Y) \approx \int_0^\infty x e^{-x} \left(\frac{x}{2^n} - \frac{2}{2^n}\right)^2 dx = \frac{2}{2^{2n}}$$
and if we are sampling from the uniform distribution,
$$\text{Var}(Y) \approx \int_0^\infty e^{-x} \left(\frac{x}{2^n} - \frac{1}{2^n}\right)^2 dx = \frac{1}{2^{2n}}$$
By the law of total variance, for a noisy device sampling from the ideal distribution with
probability $p$ and the uniform distribution with probability $1-p$,
$$\text{Var}(Y) \approx \frac{2p}{2^{2n}} + \frac{1-p}{2^{2n}} + \frac{p (2 - p - 1)^2}{2^{2n}} + \frac{(1-p) (1 - p - 1)^2}{2^{2n}} = \frac{1+2p-p^2}{2^{2n}} \le \frac{2}{2^{2n}}$$
So for a noisy device, the standard deviation is at most $\sqrt{2} / 2^n$. Let $\bar{Y}_k$ be the average of $Y$ over $k$ samples.
Then the standard deviation of $\bar{Y}_k$ is at most $\sqrt{2} / k 2^n$.

Applying Chebyshev's inequality and the fact that $k \ge 4/(b-1)^2 \ge 4/(b-1)^{3/2}$,
$$\Pr[\bar{Y}_k \le b/2^n] \le \Pr[\left|\bar{Y}_k - \mathbb{E}[Y]\right| \ge (b-1)/2^n] \le \frac{2}{(b-1)^2 k^2} \le (b-1)/8$$
In particular, the success probability is $s \ge 1-(b-1)/8 \ge 1-(b-1)/4b = 3/4+1/4b$.

The Google team estimated that they were sampling with
$\mathbb{E}[Y] \approx 1.002/2^n$ for their largest circuits, so to solve XHOG with $b = 1.001$ and $s = 3/4+1/4b$ they needed 4 million samples, much less than the 30 million they did take.% For this calculation we used the assumption that the noise is depolarizing, but all that was needed was to show that $\text{Var}(Y)$ is sufficiently small.

%%%%%%%%%%%%% SCRATCH %%%%%%%%%%%%%
%$$0 < Y < n/2^n$$
%The worst sampler for us just gives $Y=n/2^n$ w.p. $p$, and like $1/2^n$ otherwise. So
%$$p n + (1-p)b \ge 2b-1$$
%$$p \ge \frac{b-1}{n-b}$$
%So with only like $n/(b-1)$ samples we should be ok.
%%%%%%%%%%%%% SCRATCH %%%%%%%%%%%%%

\subsection*{Uniformly-Random Sampling}
In this section we show that one needs to take about $(b-1)^{-2}$ samples before the values $\left|\bra{z_i} C \ket{0^n}\right|^2$ can distinguish between the cases where the $z_i$ are drawn uniformly at random or passing XHOG. The KL divergence of taking a single sample is
$$\text{KL}(e^{-p}, p e^{-p}) \approx \int_0^\infty e^{-p} (b (p-1)-p+2) \log (b (p-1)-p+2) dp$$
It is not hard to calculate that the Taylor expansion of the above around $b=1$ is $(b-1)^2/2 + O((b-1)^3)$. By additivity, the KL divergence for $k$ samples is approximately $k (b-1)^2/2$. By Pinsker's inequality, the total variation distance is at most
$$\sqrt{k (b-1)^2/4}$$
Therefore, in order to have total variation distance independent of $b$, one needs $k \approx (b-1)^{-2}$.

%%%%%%%%%%%%% SCRATCH %%%%%%%%%%%%%
%\begin{align*}
%&\left.\frac{d}{db}\right|_{b=1} \int_0^\infty e^{-p} (b (p-1)-p+2) \log (b (p-1)-p+2) dp \\
%&= 0 + \int_0^\infty e^{-p} (p-1) dp \\
%&= 0 \\
%&\left.\frac{d^2}{db^2}\right|_{b=1} \int_0^\infty e^{-p} (b (p-1)-p+2) \log (b (p-1)-p+2) dp \\
%&= \left.\frac{d}{db}\right|_{b=1} \int_0^\infty e^{-p} (p-1) \log (b (p-1)-p+2) dp \\
%&= \int_0^\infty e^{-p} (p-1)^2 dp \\
%&= 1
%\end{align*}
%%%%%%%%%%%%% SCRATCH %%%%%%%%%%%%%

\end{document}